\title{Time-Dependent Shortest Path Queries Among Growing Discs}
\author{
  Anil Maheshwari
    \and
    Arash Nouri
    \and
    J\"org-R\"udiger Sack 
    }
\date{School of Computer Science, Carleton University\\
\{anil,arash,sack\}@scs.carleton.ca}
\begin{document}

\renewcommand{\algorithmiccomment}[1]{\color{mygray} \hfill $\triangleright$ \textit{#1} \color{black}}

\definecolor{mygray}{gray}{0.3}

\newtheorem{innercustomthm}{\textbf{Lemma}}
\newenvironment{customthm}[1]
  {\renewcommand\theinnercustomthm{#1}\innercustomthm}
  {\endinnercustomthm}

\newtheorem{corollary}{Corollary}
\newtheorem{rep@theorem}{\rep@title}
\newcommand{\newreptheorem}[2]{%
\newenvironment{rep#1}[1]{%
 \def\rep@title{#2 \ref{##1}}%
 \begin{rep@theorem}}%
 {\end{rep@theorem}}}
\makeatother

\newtheorem{theorem}{Theorem}
\newtheorem{lemma}{Lemma}
\newenvironment{proof}           
{
  \noindent{\bf Proof.} 
}
{
  \hfill$\Box$\newline
}

\newtheorem{observation}{\textbf{Observation}}
\newtheorem{property}{\textbf{Property}}
\newtheorem{algX}{Algorithm}
\newenvironment{Algorithm}       {\begin{algX}\begin{em}\begin{sf}}%
                                 {\end{sf}\par\noindent
                                 --- End of Procedure ---
                                 \end{em}\end{algX}}
\newcommand{\step}[2]            {\begin{list}{}
                                  {  \setlength{\topsep}{0cm}
                                     \setlength{\partopsep}{0cm}
                                     \setlength{\leftmargin}{0.55cm}
                                     \setlength{\labelwidth}{0.55cm}
                                     \setlength{\labelsep}{0.1cm}    }
                                  \item[#1]#2    \end{list}}

\newcommand{\sstep}[2]            {\begin{list}{}
                                  {  \setlength{\topsep}{0cm}
                                     \setlength{\partopsep}{0cm}
                                     \setlength{\leftmargin}{1cm}
                                     \setlength{\labelwidth}{0.55cm}
                                     \setlength{\labelsep}{0.1cm}    }
                                  \item[#1]#2    \end{list}}

\newcommand{\mystep}[2]            {\begin{list}{}
                                  {  \setlength{\topsep}{0cm}
                                     \setlength{\partopsep}{0cm}
                                     \setlength{\leftmargin}{1.4cm}
                                     \setlength{\labelwidth}{0.55cm}
                                     \setlength{\labelsep}{0.1cm}    }
                                  \item[#1]#2    \end{list}}

\renewcommand{\theequation}{A-\arabic{equation}}
  \setcounter{equation}{0} 

\newcommand{\ee}{\mathcal{E}}
\newcommand{\vc}[1]{V(#1)}
\renewcommand{\v}[1]{\mathcal{V}_{#1}}
\newcommand{\dc}[1]{DC({#1})}
\renewcommand{\c}[3]{C_{#2}({#1})}
\renewcommand{\l}[3]{
  \ifstrequal{#1}{}
    {\ell^{#3}_{#2}}
    {\ell^{#3}_{#2}({#1})}
}

\newcommand{\location}{\mathcal{L}}

\newcommand{\departurecurve}{\mathcal{D}}

\newcommand{\isvalid}{\mathcal{V}}

\newcommand{\steinerpoints}{\mathcal{S}}

\newcommand{\alignset}{\mathcal{L}}

\newcommand{\weight}{\mathcal{W}}
\newcommand{\weights}[2]{\mathcal{W}({#1}, #2)}

\newcommand{\invalidinterval}{\mathcal{B}}

\newcommand{\robotpath}{\mathcal{R}}

\newcommand{\minimumarrival}{\mathcal{A}}

\newcommand{\T}{\lambda}

\newcommand{\au}[1]{\mathcal{A}(#1)}
\renewcommand{\ae}[2]{A_{#2}(#1)}

\newcommand{\aus}[2]{\overline{A}(t)}
\newcommand{\aes}[3]{\overline{A}(#3)}

\renewcommand{\L}[2]{L_{(#1, #2]}}
\newcommand{\R}[2]{R_{(#1, #2]}}

\newcommand{\aux}[2]{A_{(#1, #2]}(t)}
\newcommand{\aex}[3]{A_{(#1, #2]}(#3)}

\newcommand{\tu}[1]{T_{s,d}(#1)}
\newcommand{\te}[2]{T_{#2}(#1)}

\newcommand{\fus}[2]{\overline{Aprx}(t)}
\newcommand{\fes}[3]{\overline{Aprx}({#3})}

\newcommand{\fex}[3]{Aprx_{(#1, #2]}({#3})}
\newcommand{\fux}[2]{Aprx_{(#1, #2]}(t)}

\newcommand{\fu}[1]{Aprx_{s,d}(#1)}
\newcommand{\fe}[2]{Aprx_{#2}(#1)}

\newcommand{\vm}{\mathcal{V}_{c}}
\newcommand{\vr}{\mathcal{V}_{r}}

\newcommand{\dd}{\beta}

\renewcommand{\ae}[2]{\mathcal{A}_{#2}(#1)}

\newcommand{\aae}{\overline{\mathcal{A}}}

\renewcommand{\algorithmiccomment}[1]{\color{mygray} \hfill $\triangleright$ \textit{#1} \color{black}}

\newcommand{\aaa}{\mathcal{A}}

\maketitle

\begin{abstract}
The determination of time-dependent collision-free shortest paths has received a fair amount of attention.
Here, we study the problem of computing a time-dependent shortest path among growing discs which has been previously studied for the instance where the departure times are fixed.
We address a more general setting: For two given points $s$ and $d$, we wish to determine the function $\aaa(t)$ which is the minimum arrival time at $d$ for any departure time $t$ at $s$.
We present a $(1+\epsilon)$-approximation algorithm for computing $\aaa(t)$. 

As part of preprocessing, we execute $O({1 \over \epsilon} \log({\vr \over \vm}))$ shortest path computations for fixed departure times, where $\vr$ is the maximum speed of the robot and $\vm$ is the minimum growth rate of the discs. 
For any query departure time $t \geq 0$ from $s$, we can approximate the minimum arrival time at the destination in 
$O(\log ({1 \over \epsilon}) + \log\log({\vr \over \vm}))$ 
 time, within a factor of $1+\epsilon$ of optimal. Since we treat the shortest path computations as black-box functions, for different settings of growing discs, we can plug-in different shortest path algorithms. Thus, the exact time complexity of our algorithm is determined by the running time of the shortest path computations.
 \end{abstract}

\section{Introduction }

 An algorithmic challenge in robotics arises when a point object (modeling a robot, person or vehicle) is operating among moving entities or obstacles, e.g., settings in which a point object needs to avoid encountering individuals who are moving from known locations, with known speeds, but in unknown directions. This uncertainty can be modeled by discs, whose radii grow over time. Therefore, the task of computing a shortest path avoiding these individuals reduces to computing shortest path among growing discs. This particular motivation arose, e.g., in video games \cite{shortest_path_in_growing_disc_referred}.

Given are a set of growing discs $\mathscr{C}=\{C_1, ..., C_n\}$ (the obstacles), a point robot $R$ with maximum speed $\vr$, a source point $s$ and a destination point $d$, located on the plane. The radii of the discs are growing with the same constant speed $V \in (0, \vr)$.
The \textit{shortest path among growing discs} (SPGD) problem is to find a shortest path from $s$ to $d$, such that the robot leaves the source immediately (i.e., at time $t=0$) and does not intersect the interior of the discs, to reach $d$ as quickly as possible.
Yi \cite{shortest_path_in_growing_disc_yi} showed that this problem can be solved in $O(n^2 \log n)$ time.

In this paper, we study the time-dependent version of the SPGD problem, where the departure time is a variable. 
The objective is to find the minimum arrival time function $\aaa(t)$, defined as the earliest time when the robot can arrive at destination $d$ such that: (1) $R$ leaves $s$ at time $t$, (2) $R$ does not intersect the interior of the discs after the departure.
We refer to this problem as the \textit{time-dependent shortest path among growing discs} (TDSP) problem.

\textbf{Related results.} 
The SPGD problem has been studied in different settings. Overmars et al. studied this problem in the setting where the discs are growing with equal constant speed. They presented an $O(n^3 \log n)$ time algorithm, where $n$ is the number of discs. This result was improved by Yi \cite{shortest_path_in_growing_disc_yi} who showed that the shortest path among the same-speed growing discs can be found in  $O(n^2 \log n)$ time. Yi also presented an $O(n^3 \log n)$ time algorithm for the case where the discs are growing with different speeds. Nouri and Sack \cite{nouri} studied a general version of this problem where the speeds are polynomial functions of time.

Time-dependent shortest path problems have been studied in network settings (see \cite{Dehne-TDSP, masoud-TDSP, suri-TDSP}). 
A \textit{network} is a graph $G=(V, E)$ with edge set $E$ and node set $V$. Each edge $e \in E$ is
assigned with a real valued weight. 
Given a source node $s \in V$ and a destination node $d \in V$, a shortest path from $s$ to $d$ is a path in $G$, where the sum of the weights
of its constituent edges is minimized. However,
in many applications, the weight of the edges are dynamically changing
over time. In such situations, the total weight of a path depends on the departure time at its source. The problem of computing shortest paths from $s$ to $d$ for all possible departure times at
$s$ is known as the time-dependent shortest path problem. The general shortest path problem on time-dependent
networks has been proven to be NP-Hard \cite{orda}. However, there are several approximation algorithms (see \cite{Dehne_FIFO, masoud-TDSP, suri-TDSP}), which are of interest in real-world applications \cite{chen}. 

 \textbf{Contribution.}  We say an approximation function $\aae(t, \epsilon)$ is a $(1+\epsilon)$-approximation for function $\aaa(t)$ if $\aaa(t) \leq \aae(t, \epsilon) \leq (1+\epsilon)\aaa(t)$ for all positive values of $t$. 
Here, our contribution is to compute a $(1+\epsilon)$-approximation for the minimum arrival time function $\aaa(t)$. 
The preprocessing step of our algorithm executes $O({1 \over \epsilon} \log({\vr \over \vm}))$ time-minimal path computations 
for fixed departure times, where $\vr$ is the maximum speed of the robot and $\vm$ is the minimum growth rate of the discs. 
Then, for a given query departure time $t \geq 0$ from $s$, we can report the minimum arrival time at the destination in $O(\log ({1 \over \epsilon}) + \log\log({\vr \over \vm}))$  time, within a factor of $1+\epsilon$ of optimal. We first start with a simple version of the problem, where all discs are growing with speed $\vm$. In this setting, each time-minimal path computation runs in $O(n^2 \log n)$ time \cite{shortest_path_in_growing_disc_yi}. Our algorithm runs the shortest path computations as a black-box and its time complexity  is determined by the number of such calls. This enables us to extend our algorithm for different settings of growing discs \cite{shortest_path_in_growing_disc_yi,previous}, by plugging-in appropriate shortest path algorithms.

In Section \ref{prop}, we establish several properties of the arrival time function. These properties allow us to work with arrival time functions instead of a more indirect approach of using the travel time, which has been utilized in previous work. 
 We show that $\aaa(t)$ is the lower envelope of a set of curves called the arrival time functions. The algorithm's output size is denoted by $F_{\aaa}$ and counts the number of pieces 
(i.e. the sub-arcs) 
on the lower envelope needed to represent the
 function $\aaa(t)$. In Section \ref{ouput}, we establish a lower bound for $F_{\aaa}$. This lower bound, along with the complexity of computing the lower envelope, provides the motivation to study the approximation algorithm in the first place.

In Section \ref{appA}, we define the \textit{reverse shortest path problem}, in which we are to find a path from the destination to the source.
Existing algorithms for the time dependent shortest paths utilize the reverse shortest path computations. 
These computations 
were done by a reversal of Dijkstra's algorithm \cite{Reversibility,masoud-TDSP}. Here, we need to generalize the existing shortest path computations for growing discs \cite{shortest_path_in_growing_disc_yi} to shortest path computations for shrinking discs.

\section{Preliminaries}\label{prel}

\subsection{Time-minimal paths among growing discs} \label{PTDSP}
 
A \textit{robot-path} is a path in the plane that connects the source point $s$ to the destination point $d$. 
The time at which the robot departs $s$ is called the \textit{departure time} and the time it arrives at $d$ is the \textit{arrival time}.
  We call a path $\T$ \textit{valid} (or \textit{collision-free}) if it does not intersect any of the discs in $\mathscr{C}$; otherwise, $\T$ is \textit{invalid}. 
For a fixed departure time, a \textbf{time-minimal path} is a valid robot-path, where the arrival time is minimized over all valid paths.

It is proven that on any time-minimal path, the point robot always moves with maximal velocity of $\vr$ \cite{shortest_path_in_growing_disc_referred}.
It is shown in \cite{nouri, growing_discs_overmars} that any time-minimal path from $s$ to $d$ is solely composed of two types of alternating sub-paths: (1) \textbf{tangent paths:} straight line paths that are
tangent to pairs of discs, and (2) \textbf{spiral paths:} logarithmic spiral paths that each lies on a
boundary of the growing disc. We describe these two paths in the following.

\begin{figure}[t]
\captionsetup[subfigure]{justification=centering}
\centering
    \begin{subfigure}{0.4\linewidth}
    \centering
    \includegraphics[width=100pt]{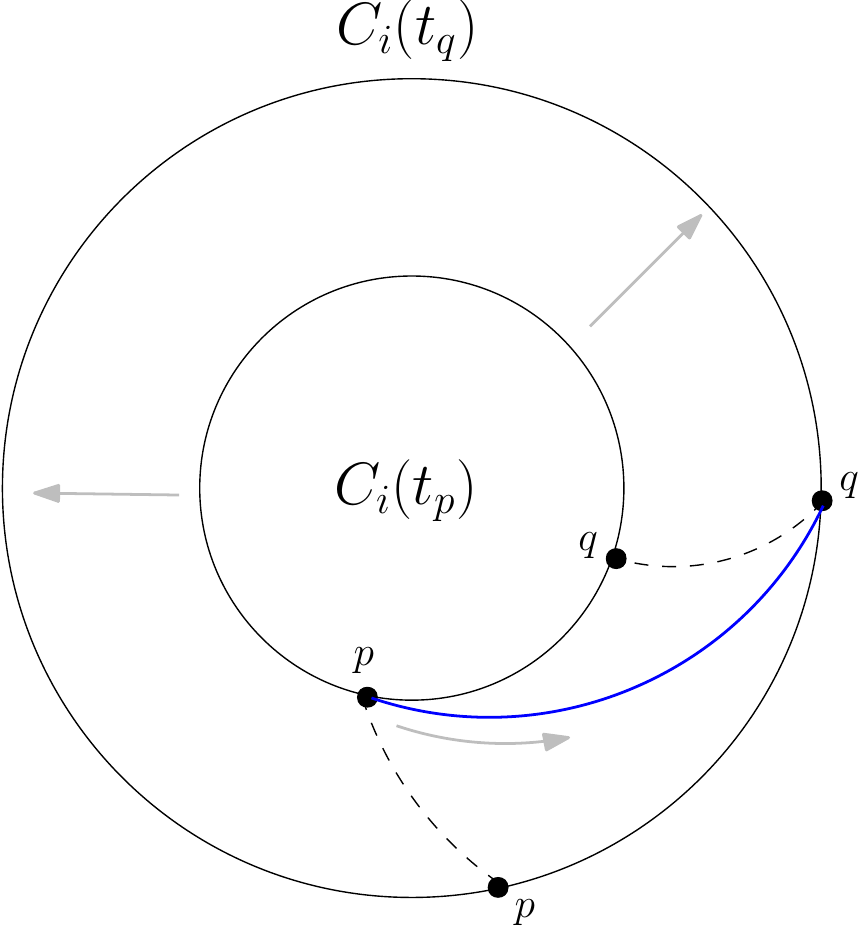}
  \caption{}
  \end{subfigure}
    \begin{subfigure}{0.4\linewidth}
    \centering 
    \includegraphics[width=70pt]{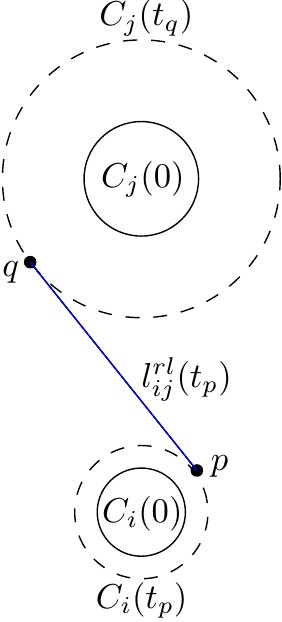}
    \caption{}
  \end{subfigure}
\caption{Two robot-paths are illustrated: (a) spiral path, (b) a right-left tangent path. Note that $t_p$ is the departure time and $t_q$ is the arrival time where $t_q > t_p$. }\label{fig-spiral}
\end{figure}

For any pair of discs $C_i$ and $C_j$, we define four \textit{tangent paths} corresponding to their tangent lines: right-right,
right-left, left-right and left-left tangents, denoted by $\ell_{ij}^{rr}$, $\ell_{ij}^{rl}$, $\ell_{ij}^{rl}$ and $\ell_{ij}^{ll}$ (see Figure \ref{fig-spiral}). 
Each tangent path $\overrightarrow{\ell_{ij}^{rr}}(\tau) = \overline{pq}$ represents a straight line robot-path from a point $p$ on the boundary of $C_i$ to a point $q$ on the boundary of $C_j$. The two points $p$ and $q$ are called \textit{Steiner points}.
A \textit{spiral path} $\overrightarrow{\sigma}=\wideparen{pq}$, represents the trace of the robot's move, over
time, from $p$ to $q$, where $p,q \in \partial C_i$ are two ``consecutive'' Steiner points (see Figure \ref{fig-spiral} (a)). 
Note that the length of these paths are changing over time. If the robot leaves $p$ at time $\tau$, then it arrives at $q$ at time $\hat{\tau}$, where $\tau < \hat{\tau}$. Observe that there exist $O(n^2)$ (moving) tangent/spiral paths and $O(n^2)$ (moving) Steiner points.

To simplify our exposition, we let the two points $s$ and $d$ be two discs with zero radii and zero velocities and add them to $\mathscr{C}$.
Let $\ee$ be the set of all spiral and tangent paths. Let $\steinerpoints$ be the set of all Steiner points. 
We construct a directed \textit{adjacency graph} $G = (V_s, E_s)$ as follows. With each Steiner point $v \in \steinerpoints$ we associate a unique vertex, $\dot{v}$, in $V_s$.
Then, with each path $\overrightarrow{uv} \in \ee$ we associate a unique edge $\overrightarrow{\dot{u}\dot{v}}$ in $E_s$. 
The weight of each edge in $G$ is the length of its corresponding tangent or spiral path, which is a function of time.
Since each edge in $E_s$ is associated with a path in $\ee$, therefore, each path in the graph $G$ is associated with a sequence of paths in $\ee$.  

Yi \cite{shortest_path_in_growing_disc_yi} showed that by running Dijkstra's algorithm on the adjacency graph, the SPGD problem can be solved in $O(n^2 \log n)$ time. The main steps of this algorithm are as follows:

\begin{itemize}
\item[($i$)] Identify the Steiner points, tangent paths and spiral paths (i.e., the sets $\steinerpoints$ and $\ee$).
\item[($ii$)] Construct the adjacency graph $G = (V_s, E_s)$.
\item[($iii$)] Run Dijkstra's algorithm to find a time-minimal path between $s,d \in V_s$.
\end{itemize}

In our approximation algorithm, we use the above algorithm as a black-box. The input to this algorithm is a set of growing discs and a departure time $\tau$. The output is a shortest path between $s$ and $d$ in the adjacency graph, denoted by $\pi(s, d, \tau)$.

\subsection{Minimum time-dependent arrival time}

\begin{figure}
\captionsetup[subfigure]{justification=centering}
\centering
  \begin{subfigure}{0.3\linewidth}
  \centering
  \includegraphics[width=100pt]{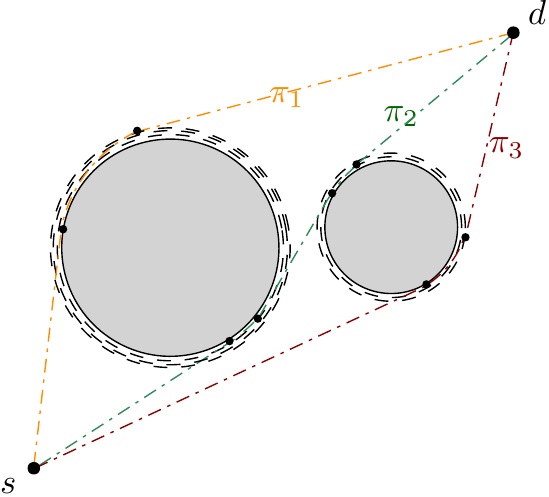}
    \caption{}
  \end{subfigure}
  \begin{subfigure}{0.3\linewidth}\centering
  \includegraphics[width=110pt]{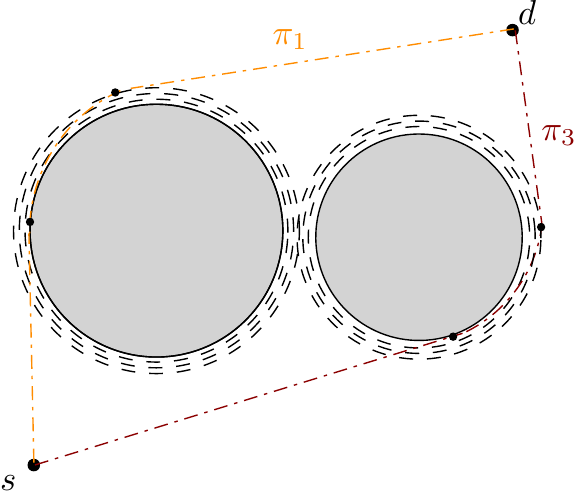}
    \caption{}
  \end{subfigure}
  \begin{subfigure}{0.3\linewidth}\centering
  \includegraphics[width=120pt]{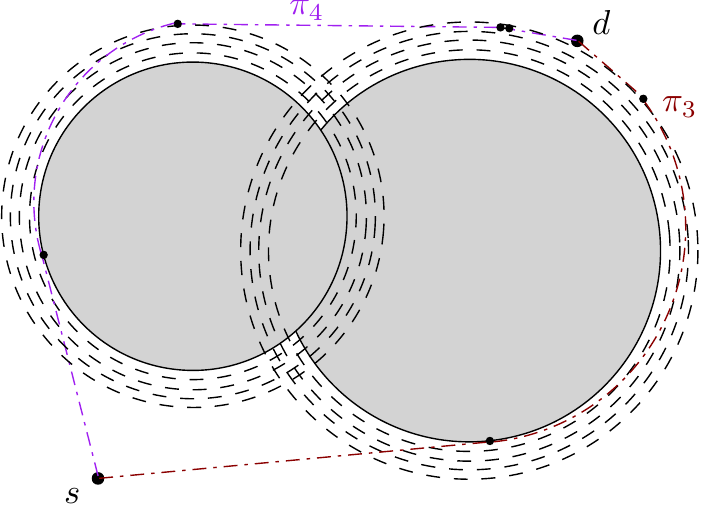}
    \caption{}
  \end{subfigure}
\caption{This figure illustrates valid paths between $s$ and $d$ for three different departure times. 
(a) At this time, three paths $\pi_1$, $\pi_2$ and $\pi_3$ are valid. (b) Path $\pi_2$ is obstructed and becomes invalid. Consequently, path $\pi_3$ is the time-minimal path. (c) Path $\pi_1$ is obstructed and $\pi_4$ is the time-minimal path.}\label{arrival-maximum-paths}

\end{figure} 

\begin{figure}[t]
\centering
\includegraphics[width=200pt]{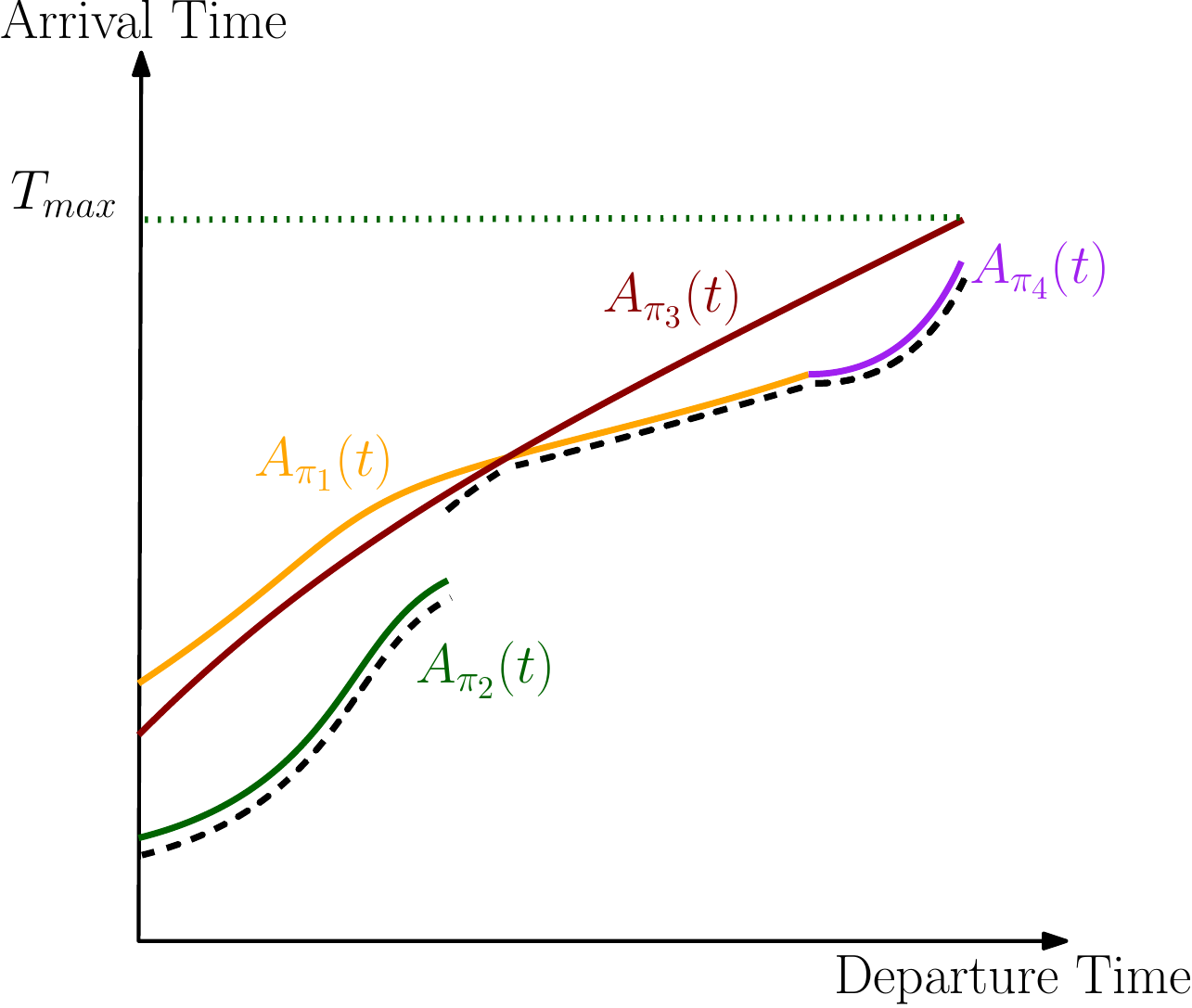}
\caption{This figure represents the arrival curves corresponding to the time-minimal paths in Figure \ref{arrival-maximum-paths}. The dotted curve (the lower envelope) represents the minimum arrival time function $\aaa(t)$. 
}\label{minimum-arrival-time}
\end{figure}

 Let $\mathscr{P}$ be the set of all paths between $s$ and $d$, in the adjacency graph $G$. Let $\pi \in \mathscr{P}$ and $\tau$ be a departure time at $s$. Recall that there exists a unique geometric path corresponding to the pair $(\pi, \tau)$.
  For instance, Figure \ref{arrival-maximum-paths} shows the geometric paths corresponding to a set of paths $\{\pi_1, \pi_2, \pi_3, \pi_4\}$ at three different departure times.

For a given path $\pi \in \mathscr{P}$, let the \textit{arrival time function} $\ae{t}{\pi}$ represents the arrival time of the robot-path corresponding to $\pi$ when the departure time is $t$. 
Let $\mathscr{A}=\{\ae{t}{\pi}|\pi\in \mathscr{P}\}$ be the arrival time function corresponding to the paths in $\mathscr{P}$. The lower envelope of the set $\mathscr{A}$, denoted by $\aaa(t)$, is defined as the point-wise minimum of all the arrival functions in $\mathscr{A}$. The lower envelope is formally defined as:
\begin{align*}\aaa(t) = \min\limits_{\pi \in \mathscr{P}} \overline{A_{\pi}}(t)\end{align*}
where $\overline{A_{\pi}}(t) = \ae{t}{\pi}$ if $\pi$ is a valid path and $\overline{A_{\pi}}(t) = \infty$, otherwise. 
We call $\aaa(t)$ the \textit{minimum arrival time function}, which is found via finding the lower envelope of the arrival functions in $\mathscr{A}$. We also refer to the minimum arrival time functions as arrival curves.

Figure \ref{minimum-arrival-time} depicts the arrival functions corresponding to the paths in Figure \ref{arrival-maximum-paths}. 
 $\aaa(t)$ is found as the lower envelope of the arrival functions $\{\ae{t}{\pi_1}, \ae{t}{\pi_2}, \ae{t}{\pi_3}, \ae{t}{\pi_4}\}$. 
 Observe that after a certain time, the destination point will be contained in some disc. We denote this time by $T_{max}$. Since there exists no valid path from $s$ to $d$ after $T_{max}$, for any $t > T_{max}$ we have $\aaa(t) = \infty$.
  Observe that the lower envelope is composed of sub-arcs of arrival curves in $\mathscr{A}$. A \textit{sub-arc} is a maximal connected piece of an arrival curve on the lower envelope.

\section{Properties of $\aaa(t)$} \label{prop}


In this section, we define some properties of the minimum arrival time function.
First, we show that $\aaa(t)$ is an increasing function. If the robot is allowed to move with any speed lower than $\vr$, then the below lemma is straightforward (the later it departs from the source the later it arrives at the destination). However, we assumed that the robot moves with maximum speed at all time and only uses tangent and/or spiral paths. Thus, the following proof is necessary.

\begin{figure}[t]
\centering
\includegraphics[width=220pt]{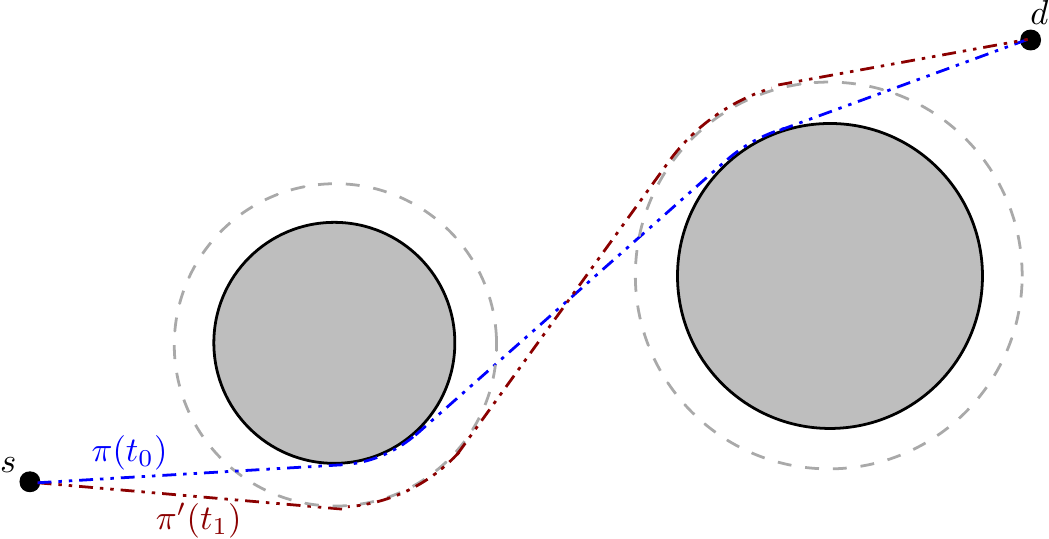}
\caption{ This figure illustrates two time-minimal paths for two departure times $t_0$ and $t_1$, where $t_1 > t_0$. For the sake of simplicity, we assumed that the speed of the robot is considerably higher than the growth rates of the discs ($V \ll \vr$).}\label{property-i}
\end{figure}

\begin{lemma}\label{T-properties-1}
$\aaa(t)$ is an increasing function.
\end{lemma}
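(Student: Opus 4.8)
The plan is to argue by contradiction: suppose there are departure times $t_0 < t_1$ with $\aaa(t_1) < \aaa(t_0)$, i.e.\ a valid time-minimal path $\T_1$ departing at $t_1$ that arrives strictly before the time-minimal path departing at $t_0$. I will use $\T_1$ to construct a valid path departing at $t_0$ whose arrival time is at most $\aaa(t_1)$, contradicting the minimality of $\aaa(t_0)$. The natural candidate is to have the robot depart $s$ at time $t_0$, \emph{wait} at $s$ until time $t_1$ (this is legal since $s$ has zero radius and zero velocity, so no disc can have swallowed $s$ between $t_0$ and $t_1$ — if it had, no valid path would depart at $t_1$ either), and then follow $\T_1$ verbatim. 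This immediately gives arrival time $\aaa(t_1) < \aaa(t_0)$.

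The subtlety — and the reason the paper flags that a proof is ``necessary'' — is that waiting is not an allowed primitive: on a time-minimal path the robot must move at full speed $\vr$ at all times, using only tangent and spiral sub-paths. So the real work is to convert the ``depart at $t_0$, wait, then run $\T_1$'' trajectory into a genuine valid full-speed path, or else to show directly that some valid full-speed path from the earlier departure does at least as well. I would handle this by a continuity/sweeping argument: consider the family of time-minimal paths parametrized by departure time, and study how the path and its arrival time vary. The key geometric fact is monotonicity of the obstacle configuration — each disc's region $C_i$ at time $t$ is contained in $C_i$ at any later time — so the free space only shrinks over time. Hence a full-speed trajectory that is valid when executed starting at the later time $t_1$ traces through free space that was \emph{also} free at all earlier moments; the point is to re-time or slightly perturb it so that a robot leaving $s$ at $t_0$ can reach $d$ no later. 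One clean way: take the reversed path from $d$ back to $s$ for departure $t_1$, and observe that running it (at full speed, along the same tangents/spirals) starting earlier only encounters smaller discs, so it remains collision-free; re-reversing yields a valid $s$-to-$d$ path for a departure time $\le t_1$ that arrives by $\aaa(t_1)$. Combining this with the intermediate-value behavior of arrival times closes the loop.

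The main obstacle I anticipate is making the ``re-timing'' rigorous while respecting the full-speed, tangent/spiral-only structure: one cannot simply translate a trajectory in time, because the spiral sub-paths lie on disc boundaries that are themselves moving, so a spiral valid for one departure time is not literally a spiral for another. I would address this by arguing at the level of the adjacency graph $\mathscr{P}$ rather than raw trajectories — fix the combinatorial path $\pi$ realizing $\aaa(t_1)$, and analyze the scalar function $\overline{A_\pi}(t)$, showing it is non-decreasing in $t$ on the interval where $\pi$ stays valid (each tangent length and each spiral traversal time is non-decreasing as the discs grow), so that $\overline{A_\pi}(t_0) \le \overline{A_\pi}(t_1) = \aaa(t_1) < \aaa(t_0)$, contradicting $\aaa(t_0) = \min_{\pi'} \overline{A_{\pi'}}(t_0)$. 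The remaining detail is to verify that $\pi$ is indeed valid at $t_0$: since validity of a fixed combinatorial path can only be lost as discs grow (free space shrinks with time), validity at $t_1$ implies validity at every $t \le t_1$, in particular at $t_0$. This reduces the whole lemma to the single monotonicity claim ``for a fixed $\pi\in\mathscr{P}$, $\overline{A_\pi}(t)$ is non-decreasing,'' which I would prove by induction on the sub-paths of $\pi$, using that both a tangent segment's length and a spiral arc's traversal time grow with the radii of the discs involved.
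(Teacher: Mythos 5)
Your overall strategy coincides with the paper's: argue by contradiction, use the fact that the discs only grow (so free space only shrinks) to transport the later time-minimal trajectory back to the earlier departure time, and contradict minimality at $t_0$. However, the concrete reduction you finally commit to rests on a false sub-claim. You propose to prove that $\overline{A_\pi}(t)$ is non-decreasing for a fixed combinatorial path $\pi$ ``by induction on the sub-paths of $\pi$, using that both a tangent segment's length and a spiral arc's traversal time grow with the radii.'' That piecewise statement is not true: the left-right and right-left (internal) common tangent between two discs at distance $D$ with radii $r_1,r_2$ has length $\sqrt{D^2-(r_1+r_2)^2}$, which \emph{decreases} as the radii grow (and the spiral arcs' endpoints are moving Steiner points, so their traversal times are not obviously monotone either). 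So the induction you outline does not go through, and the lemma you reduce to is left unproved.

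The route you flagged as the ``main obstacle'' --- re-timing the $t_1$-trajectory to start at $t_0$ --- is in fact the easy and correct one, and is essentially what the paper does. The re-timed curve need not be a concatenation of tangents and spirals for departure time $t_0$; it only needs to be \emph{some} valid full-speed robot-path, since $\aaa(t_0)$ is the minimum arrival time over all valid robot-paths departing at $t_0$. Validity of the re-timed curve is immediate from $C_i(t)\subseteq C_i(t')$ for $t\le t'$: each point of the fixed geometric curve is visited strictly earlier, when every disc is smaller. The only other ingredient, which your sketch leaves implicit, is the travel-time arithmetic: from the contradiction hypothesis $\ae{t_1}{\pi'}\le \ae{t_0}{\pi}$ and $t_0<t_1$ one gets $T'=\ae{t_1}{\pi'}-t_1 < \ae{t_0}{\pi}-t_0=T$, so the re-timed curve arrives at $t_0+T'<t_0+T=\aaa(t_0)$, the desired contradiction. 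If you replace your per-piece monotonicity argument with this two-line computation, your proof becomes the paper's proof.
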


\begin{proof} Let $\pi({t_0})$ and $\pi'({t_1})$ be any two time-minimal robot-paths between $s$ and $d$, corresponding to two departure times $t_0$ and $t_1$, where $t_0 < t_1$ (see Figure \ref{property-i}). We prove that $\ae{t_0}{\pi} < \ae{t_1}{\pi'}$ and consequently $\aaa(t_0) < \aaa(t_1)$. 

Let $\te{t_1}{\pi'} = \ae{t_1}{\pi'} - t_1$ be the travel time for the path $\pi'(t_1)$. Similarly, let $\te{t_0}{\pi} = \ae{t_0}{\pi} - t_0$. By contradiction, we assume $\ae{t_1}{\pi'} \leq \ae{t_0}{\pi}$. Then:
\begin{align*}
&\te{t_1}{\pi'} + t_1 \leq \te{t_0}{\pi} + t_0 \\
\stackrel{t_0 < t_1}{\Rightarrow}&\te{t_1}{\pi'}  < \te{t_0}{\pi} \\
\end{align*}
 Therefore, $\pi({t_0})$ is a longer robot-path than $\pi'({t_1})$. Since the discs are continuously growing, the free space is shrinking simultaneously. Thus, if $\pi'({t_1})$ is a valid robot-path, it is also valid for any departure time before $t_1$. As illustrated in Figure \ref{property-i}, $\pi'({t_1})$ is a valid path when the robot leaves $s$ at time $t_0$.

Since $|\pi'({t_1})| < |\pi({t_0})|$ and $\pi'({t_1})$ is a valid robot-path when the robot departs $s$ at time $t_0$, this contradicts the fact that $\ae{t_1}{\pi'} \leq \ae{t_0}{\pi}$.
\end{proof}

Define $|\overline{sd}|$ as the Euclidean distance between the source and the destination. Recall that $\vr$ is the maximum speed of the robot, and $\vm$ is the minimum growth rate among the discs in $\mathscr{C}$. 

Let $\overrightarrow{sd}$ be a tangent path from $s$ to $d$.
For a given departure time $t$, if $\overrightarrow{sd}(t)$ is not obstructed by any disc, then the calculation of $\aaa(t)$ is straightforward. Thus, we are interested in computing $\aaa(t)$ when $\overrightarrow{sd}(t)$ is invalid, i.e., 
$\overrightarrow{sd}(t)$ is intersected by a disc for any departure time $t$. With the above assumption, we have the following lemma which states the upper and the lower bound on $\aaa(t)$.
 

\begin{lemma} \label{T-properties-2}
Let $\tau$ be a departure time, where $\au{\tau}$ is defined (i.e., $\aaa(\tau) < \infty$). Then, 

 \begin{enumerate}
  \item[(i)] $\au{\tau} \geq {|\overline{sd}| \over \vr}$

  \item[(ii)] $\au{\tau} \leq {|\overline{sd}| \over \vm}$

 \end{enumerate}


\end{lemma}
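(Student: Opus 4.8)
The plan is to prove each bound separately by exhibiting or reasoning about an appropriate path.

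\textbf{Part (i): the lower bound $\au{\tau} \geq |\overline{sd}|/\vr$.} This is essentially a speed-limit argument. Fix a departure time $\tau$ with $\aaa(\tau) < \infty$, and let $\pi$ be a time-minimal robot-path realizing $\aaa(\tau)$. The robot travels along $\pi$ from $s$ to $d$ with speed at most $\vr$ at every moment (recall it was stated that on a time-minimal path the robot always moves with maximal velocity $\vr$, but in any case its speed never exceeds $\vr$). Hence the arc length of $\pi$ is at most $\vr \cdot (\aaa(\tau) - \tau) \leq \vr \cdot \aaa(\tau)$. On the other hand the arc length of any path from $s$ to $d$ is at least the Euclidean distance $|\overline{sd}|$. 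Combining, $|\overline{sd}| \leq \vr \cdot \aaa(\tau)$, which rearranges to the claim. I would write this as a short two-line chain of inequalities; the only thing to be careful about is whether the travel time or the arrival time appears, but since $\tau \geq 0$ the arrival time is at least the travel time, so the bound stated with $\au{\tau}$ holds.

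\textbf{Part (ii): the upper bound $\au{\tau} \leq |\overline{sd}|/\vm$.} Here the idea is to construct an explicit valid path whose arrival time is at most $|\overline{sd}|/\vm$, since $\aaa(\tau)$ is the minimum over all valid paths and is therefore no larger. The natural candidate is to have the robot leave $s$ at time $\tau$ and move along the straight segment $\overline{sd}$ toward $d$ at full speed $\vr$ — but this segment may be obstructed, which is exactly the interesting case. The key observation is a comparison between how fast the robot can shrink its remaining distance to $d$ and how fast the obstacles can grow: every disc grows with rate at least $\vm$, and we want to show the robot can ``outrun'' the shrinking free space along a path of Euclidean-ish length. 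Concretely, I would argue that the robot can always make progress toward $d$ at an effective rate of at least $\vm$ — intuitively, whenever a disc blocks the direct route, the robot follows the disc boundary (a spiral arc), and because the robot is faster than the disc growth ($\vr > \vm \geq V$... actually $V$ is the common growth speed, and $\vm$ here is the minimum growth rate), the detour costs little; summing up, the total time to reach $d$ is bounded by $|\overline{sd}|/\vm$.

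The cleanest way to make Part (ii) rigorous is probably the following potential-function argument. Let $r(t)$ denote the distance from the robot's position at time $t$ to $d$. At time $\tau$ we have $r(\tau) = |\overline{sd}|$. I claim that along a suitably chosen valid path, $r(t)$ decreases at rate at least $\vm$ as long as the robot has not yet reached $d$; integrating gives that $d$ is reached by time $\tau + |\overline{sd}|/\vm$, and since $\tau$ may be taken to contribute $0$ in the worst bookkeeping (or rather, one compares to the fixed-departure problem where this bound is known), we get $\aaa(\tau) \le |\overline{sd}|/\vm$. To see the rate claim: if the open segment from the current position to $d$ is unobstructed, the robot moves straight at speed $\vr \geq \vm$, so $\dot r \leq -\vm$. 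If it is obstructed by some disc $C_i$, the robot moves along $\partial C_i$ (a logarithmic spiral, the admissible move type); a short geometric computation shows that even on the spiral the distance to $d$ shrinks at rate at least the growth rate of $C_i$, which is at least $\vm$. The main obstacle I anticipate is precisely this spiral estimate — verifying that hugging a growing disc boundary still yields $\dot r \leq -\vm$, and handling the transitions between tangent segments and spiral arcs without losing the rate bound. I would expect the authors to instead take a slicker route: note that the known fixed-departure algorithm of Yi produces a valid path, and that for the fixed-departure instance ``shifted'' to start at time $\tau$ one can bound its travel time by $|\overline{sd}|/\vm$ via the same monotonic-progress reasoning, then invoke Lemma \ref{T-properties-1} or a direct comparison. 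Either way, the crux is the ``robot beats the growth by factor $\vm$'' inequality, and everything else is bookkeeping.
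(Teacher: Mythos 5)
Your part (i) is correct and essentially matches the paper's argument (the paper bounds $\au{0}$ by the path-length argument and then invokes the monotonicity of Lemma \ref{T-properties-1}; you use $\tau \ge 0$ directly — both are fine). Part (ii), however, has a genuine gap, and the strategy itself cannot work. You propose to exhibit a valid path reaching $d$ by time $|\overline{sd}|/\vm$ via a potential argument in which the distance $r(t)$ to $d$ decreases at rate at least $\vm$. The crucial step — your ``spiral estimate'' that $\dot r \le -\vm$ while the robot hugs a growing disc boundary — is left unproved, and it is false in general: while detouring around a large blocking disc the robot's velocity can be nearly perpendicular to, or even directed away from, the direction of $d$, so $r$ need not decrease at all during that phase. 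Worse, the approach proves too much: if one could always construct a valid path arriving by time $|\overline{sd}|/\vm$, then $\aaa(\tau)$ would be finite for every $\tau$, contradicting the fact that no valid path exists for $t > T_{max}$, once $d$ is engulfed. The hypothesis $\aaa(\tau) < \infty$ must enter the proof somewhere, and a pure path-construction argument has no way to use it.

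The paper's proof of (ii) takes a route you did not anticipate and that sidesteps the spiral estimate entirely: it bounds from above the time $T$ at which the disc $C$ obstructing the segment $\overline{sd}$ engulfs the destination (recall the standing assumption, stated just before the lemma, that $\overline{sd}$ is always obstructed). Any valid path must deliver the robot to $d$ before $d$ lies inside a disc, so $\aaa(\tau) \le T$ whenever $\aaa(\tau)$ is finite. Then $T$ is estimated by sending the robot straight from $s$ at time $0$ until it meets $\partial C$ at a point $q$, and adding the time for $C$ to grow the remaining distance to $d$, giving $T \le |\overline{sq}|/\vr + |\overline{qd}|/\vm \le |\overline{sd}|/\vm$. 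No valid path is constructed and no claim about progress along spirals is needed; you should replace your part (ii) with an argument of this kind.
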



\begin{proof}
(i)  
Let $\T_1$ be a valid time-minimal robot-path from $s$ to $d$ with departure time $0$. 
It is observed that the length of $\T_1$ is greater or equal to $|\overline{sd}|$. Thus, ${|\overline{sd}| \over \vr} \leq \au{0}$.
By Lemma \ref{T-properties-1}, we have $\aaa(0) \leq \aaa(\tau)$.
Therefore, $  {|\overline{sd}| \over \vr} \leq \au{\tau}$.

 (ii) Let  $\T_2$ be the straight line (invalid) robot-path from $s$ to $d$, which is obstructed by the disc $C \in \mathscr{C}$. Let the robot depart $s$ at time $\tau$ and move along the path $\T_2$ with speed $\vr$, until it arrives at the boundary of $C$ at point $q$. Note that the robot arrives at $q$ at time $\tau + {|\overline{sq}| \over \vr}$.
Let $x$ be the shortest Euclidean distance between $d$ and the boundary of $C$ at time $\tau + {|\overline{sq}| \over \vr}$. If $C$ encloses $d$ at time $T$, then the robot must arrive at the destination at or before $T$.
Thus, we obtain $T \leq \tau + {|\overline{sq}| \over \vr} + {x \over V}$.
Because $x \leq |\overline{qd}|$ and $V < \vm$, we have:
  \begin{align*}
  T \leq \tau + {|\overline{sq}| \over \vr} + {|\overline{qd}| \over \vm}
  \end{align*}
  Since the above inequality is true for all departure times (including $\tau = 0$), we must have: 
 \begin{align*}
 &T \leq {{|\overline{sq}| \over \vm} + {|\overline{qd}| \over \vm}} = {|\overline{sd}| \over \vm}\\
 \Rightarrow& \aaa(\tau) \leq {|\overline{sd}| \over \vm}
\end{align*}
\end{proof}

\subsection{The output size} \label{ouput}
The \textit{output size} of the time-dependent shortest path, denoted by $F_{\aaa}$, is defined as the number of sub-arcs in the lower-envelope needed to represent the minimum arrival time function $\aaa(t)$. 
In Lemma \ref{ob}, we provide an example for which $F_{\aaa}$ is $\Theta(n^2)$. 
Therefore, the number of sub-arcs in the lower envelope is lower bounded by $\Omega(n^2)$. 
This lower bound, along with the complexity of calculating the sub-arcs, inspired us to develop an approximation algorithm for this problem.

\begin{lemma}\label{ob}
$F_{\aaa}$ is lower bounded by $\Omega(n^2)$.
\end{lemma}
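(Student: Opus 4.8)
The plan is to exhibit an explicit family of instances — one per value of $n$ — in which the lower envelope $\aaa(t)$ is forced to switch among $\Omega(n^2)$ distinct time-minimal paths as the departure time $t$ varies, so that the number of sub-arcs is $\Omega(n^2)$. The natural construction exploits the fact that there are $\Theta(n^2)$ tangent/spiral pieces in the adjacency graph: I would place the discs so that $\Theta(n)$ of them form a ``barrier'' separating $s$ from $d$, with $\Theta(n)$ narrow gaps (channels) between consecutive barrier discs, and then arrange a second group of $\Theta(n)$ discs whose growth successively closes each channel at a different time. Each time a channel closes, the robot is forced onto a qualitatively different homotopy class of path, contributing a new sub-arc to the lower envelope.

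Concretely, I would first describe the geometry: let the barrier discs $C_1,\dots,C_k$ (with $k=\Theta(n)$) sit on a vertical line between $s$ and $d$, each growing at rate $\vm$, leaving $k-1$ small gaps $g_1,\dots,g_{k-1}$. For each gap $g_j$ I would introduce an auxiliary disc $B_j$ (so $k-1 = \Theta(n)$ of them), positioned and sized so that $B_j$ first touches/obstructs $g_j$ at a time $t_j$, with $t_1 < t_2 < \dots < t_{k-1}$, all strictly before $T_{max}$. The second step is to argue that for a departure time $t$ in the interval $(t_{j-1}, t_j)$, the unique time-minimal path passes through gap $g_j$ (the lowest-index still-open gap that the growing configuration makes cheapest) — this requires checking that passing through an open gap is always faster than detouring around the whole barrier, which follows from Lemma~\ref{T-properties-2} style bounds once the barrier is made long enough and the gaps are placed appropriately. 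The third step is to observe that distinct gaps correspond to distinct paths $\pi$ in $\mathscr{P}$, hence to distinct arrival curves $\ae{t}{\pi}$, each contributing at least one sub-arc to $\aaa(t)$; since there are $\Theta(n)$ gaps each ``active'' over a non-degenerate departure interval, this already gives $\Omega(n)$ sub-arcs. To push this to $\Omega(n^2)$, I would refine the construction so that within each gap's active window the robot must additionally choose among $\Theta(n)$ sub-channels (e.g. by nesting a second family of $\Theta(n)$ obstacles near each gap, or by a more careful two-dimensional grid-like arrangement of $\Theta(n)\times\Theta(n)$ discs), so that the set of time-minimal homotopy classes realized as $t$ ranges over $[0,T_{max}]$ has cardinality $\Theta(n^2)$, and each is realized on a maximal sub-interval of positive length.

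The main obstacle I anticipate is the second step — verifying that the intended path really is the time-minimal one on the claimed departure interval, and that the transitions happen cleanly (one new sub-arc per event, no sub-arc collapsing to a point). Because the discs grow, the length of every candidate path is itself time-dependent, so I must ensure the ordering of arrival times among candidates is governed solely by which gaps are open, not by subtle metric coincidences; the cleanest way to do this is to make the barrier very long relative to the inter-gap spacing and to make the $B_j$'s growth rates/offsets generic, so that the crossover times $t_j$ are distinct and each active interval has length bounded below by a positive constant depending on the construction. I would also need to double-check that $s$ and $d$ (treated as zero-radius discs per the Preliminaries) and the $O(n^2)$ Steiner-point bound are consistent with the counting, so that the $\Omega(n^2)$ sub-arcs are genuinely forced and not an artifact of over-counting degenerate pieces. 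Once the geometric separation and the monotone schedule of channel-closing events are pinned down, the counting itself is routine, and combined with the matching nature of the lower-envelope (each sub-arc comes from a genuinely distinct path) it yields $F_{\aaa} = \Omega(n^2)$.
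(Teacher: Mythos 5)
Your high-level plan (an explicit family of instances forcing $\Omega(n^2)$ distinct time-minimal paths, each realized on a sub-interval of positive length) matches the paper's strategy, and your first stage — a barrier of $\Theta(n)$ discs whose gaps close at distinct times, each closure forcing a new path — is a sound way to get $\Omega(n)$ sub-arcs. But the step that upgrades $\Omega(n)$ to $\Omega(n^2)$ is exactly where your proposal has a genuine gap: both refinements you suggest (nesting a family of $\Theta(n)$ obstacles near \emph{each} of the $\Theta(n)$ gaps, or a grid of $\Theta(n)\times\Theta(n)$ discs) use $\Theta(n^2)$ discs in total. With $N=\Theta(n^2)$ obstacles, exhibiting $\Theta(n^2)$ sub-arcs only proves $F_{\mathcal{A}}=\Omega(N)$, a linear bound in the actual input size, not the quadratic bound claimed. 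The whole difficulty of the lemma is to extract $n^2$ distinct envelope pieces from only $\Theta(n)$ discs, and your construction as described does not do that.

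The paper resolves this by \emph{reusing} one family of discs across $n$ phases driven by a second family. A single row $C_1,\dots,C_n$ of growing discs between $s$ and $d$ is sized so that, as the departure time increases through $\tau_1<\cdots<\tau_n$, the time-minimal path successively picks up tangency to one more disc of the row, yielding $n$ distinct paths in the adjacency graph. Then a second (disjoint) family $C_{n+1},\dots,C_{2n}$, stacked near $C_1$, is arranged so that $C_{n+j}$ merges into the configuration at a time just after the $j$-th phase ends; each merge replaces the first tangent disc and restarts the sequence of $n$ tangency events against the \emph{same} row. This gives $n$ phases of $n$ distinct paths each, i.e.\ $\Theta(n^2)$ sub-arcs, using only $\Theta(n)$ discs. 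To repair your argument you would need an analogous reuse mechanism — e.g.\ a single secondary family whose members successively perturb the cost ordering of \emph{all} the gaps at once — rather than independent per-gap gadgets; without it, the quadratic count does not follow.
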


\begin{figure}[t]
\captionsetup[subfigure]{justification=centering}
\centering
  \begin{subfigure}{\linewidth}\centering
    \includegraphics[width=240pt]{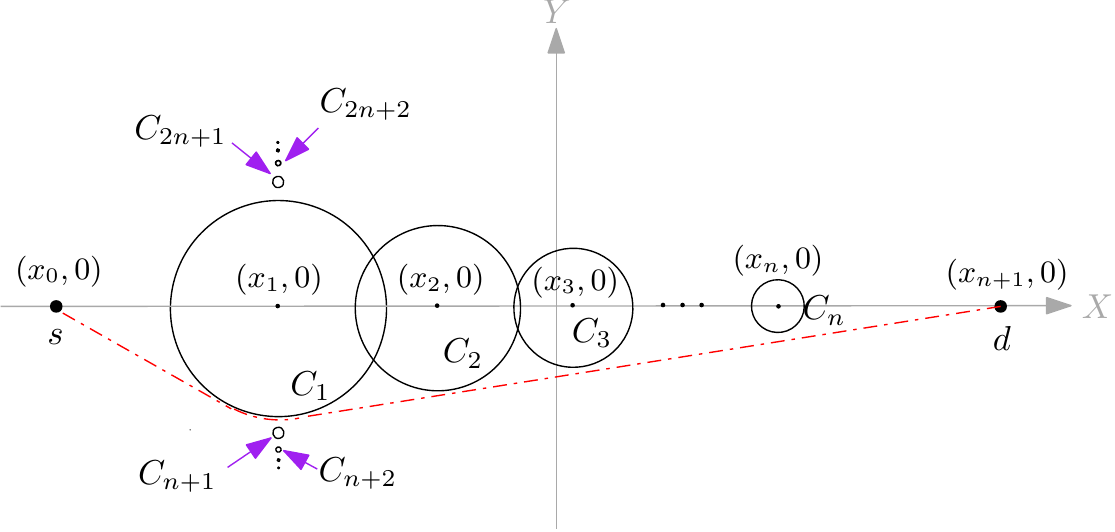}
    \caption{}
  \end{subfigure}  
\par\bigskip
  \begin{subfigure}{\linewidth}\centering
    \includegraphics[width=240pt]{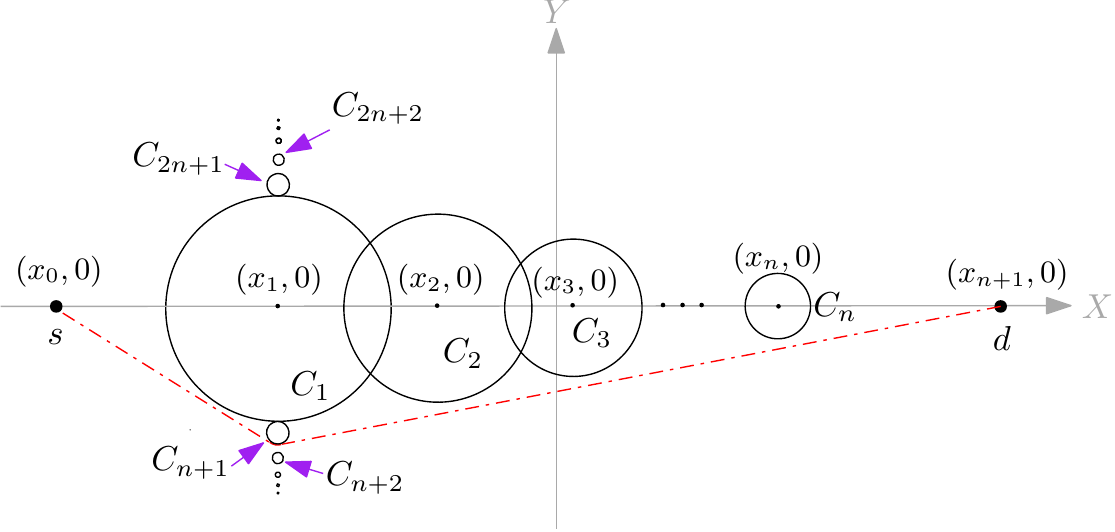}
    \caption{}
  \end{subfigure}  

    \caption{This figure illustrates an example where there are $\Theta(n^2)$ unique time-minimal paths for different departure times. For the ease of demonstration, we assumed $V \ll \vr$.} 
    \label{example2}
\end{figure}

\begin{proof}

We prove this lemma by giving an example where $F_{\aaa}$ is $\Theta(n^2)$.
 In this example, which is illustrated in Figure \ref{example2}, the source is located at $(x_0, 0)$ and the destination is located at $(x_{n+1}, 0)$, where $x_0 \ll x_{n+1}$. A set of growing discs $\mathscr{C}_1=\{C_1, ..., C_n\}$ are sorted along the x-axis, such that $C_i \in \mathscr{C}_1$ is centered at point $(x_i, 0)$, where $x_{i-1} < x_i < x_{i+1}$ and $x_{0} \ll x_i \ll x_{n+1}$. We also define a set of ``disjoint'' growing discs $\mathscr{C}_2=\{C_{n+1}, ..., C_{2n}\}$, such that $C_i \in \mathscr{C}_2$ is centered at point $(x_1, y_i)$, where $y_{i+1} < y_{i} < 0$. Similarly, we define a set of ``disjoint'' growing discs $\mathscr{C}_3=\{C_{2n+1}, ..., C_{3n}\}$, such that $C_i \in \mathscr{C}_3$ is centered at point $(x_1, -y_i)$. For the sake of simplicity, we assume that $V \ll \vr$, where $V$ is the speed of the discs and $\vr$ is the maximum speed of the robot.

Let $\mathcal{T}_1 = \{\tau_{1}, ..., \tau_n\}$ be a set of departure times, sorted in increasing order. Denote by $\{\T_{\tau_1}, ..., \T_{\tau_n}\}$ the set of time-minimal robot-paths corresponding to the departure times in $\mathcal{T}_1$.
We choose the initial radius of disc $C_1$ large enough (with respect to the other discs in $\mathscr{C}_1$) so that the robot-path $\T_{\tau_1}$ is tangent to disc $C_1$ and does not touch the other discs in $\mathscr{C}_1$ (see Figure \ref{example2} (a)). Similarly, we choose the initial radius of disc $C_2$ such that $\T_{\tau_2}$ is only tangent to $C_1$ and $C_2$. 
We repeat the same procedure for all the departure times in $\mathcal{T}_1$. Consequently, each robot-path $\T_{\tau_i}$ is tangent all the discs in $\{C_{1}, C_{2}, ..., C_i \}$. 

Observe that we can choose an appropriate value for $y_{n+1}$  such that $C_1$ and $C_{n+1}$ intersect at time $\tau'_{1}$, where for a small value of $\alpha$ we have $\tau_{n} < \tau'_{1} < \tau_{n} + \alpha$. Let $\T_{\tau'_1}$ be the time-minimal robot-path corresponding to the departure time $\tau'_1$.
As illustrated in Figure \ref{example2} (b), observe that $\T_{\tau'_1}$ is tangent to disc $C_{n+1}$. By an argument similar to the above paragraph, we can define a set $\mathcal{T}_2 = \{\tau'_{1}, ..., \tau'_n\}$, where the corresponding time-minimal paths of the departure times $\tau'_{i} \in \mathcal{T}_2$ are tangent to all the discs in $\{C_{n+1}, C_{2}, ..., C_i \}$. We repeat the above procedure for all the discs in $\mathscr{C}_2$. Let $\mathcal{T} = \mathcal{T}_1 \cup \mathcal{T}_2 \cup ... \cup \mathcal{T}_n$.

Let $\pi(s, d, \tau_i)$ and $\pi(s, d, \tau_j)$ be a pair of shortest paths in the adjacency graph $G$, corresponding to the departure times $\tau_i,\tau_j \in \mathcal{T}_1$, where $i \not= j$. Since $\T_{\tau_i}$ and $\T_{\tau_j}$ are tangent to different sets of discs, we have $\pi(s, d, \tau_i) \not= \pi(s, d, \tau_j)$. Thus, observe that $\aaa(t)$ consists of $\Theta(|\mathcal{T}|) =\Theta(n^2)$ sub-arcs.
\end{proof}

\section{Approximating $\aaa(t)$} \label{appA}
\subsection{The reverse shortest path}\label{reverse}
Let us define a function $\aaa^{-1}:[0, T_{max}] \rightarrow [0, T_{max}]$ where $\aaa^{-1}(t)$ is the latest departure time at $s$, when the robot arrives at $d$ at time $t$. In this section, we describe our method for computing the function $\aaa^{-1}(t)$ for a set of ``fixed''  values of $t$. We generalize the time-minimal path algorithm presented in \cite{shortest_path_in_growing_disc_yi} to the case where the discs are shrinking. 

A growing disc $C_i$ is defined by a pair $(O_i, R_i(t))$, where $O_i$ is the center and $R_i(t)$ is the radius of $C_i$ at time $t$. Let $\partial C_i(t)$ denote the boundary of the disc $C_i$ at time $t \in [0, T_{max}]$.
We now define a \textit{shrinking disc} $\hat{C_i}$ by a pair $(O_i, \hat{R_i}(t))$, such that $\hat{R_i}(t) = R_i(T_{max} - t)$. Note the following properies: 
\begin{itemize}
\item $\hat{C_i}$ and $C_i$ are centered at the same point.
\item $\partial C_i(0) = \partial \hat{C_i}(T_{max})$.
\item $\partial C_i(t) = \partial \hat{C_i}(T_{max} - t)$.
\end{itemize}

Let $\hat{\mathscr{C}}=\{\hat{C_1}, ..., \hat{C_n}\}$ be a set of shrinking discs. We begin with the following observation.

\begin{observation}\label{obs1}
For any two given times $\tau$ and $\hat{\tau}$, where $0 \leq \tau < \hat{\tau} \leq T_{max}$, $ C_i(\tau)$ and $ C_j(\hat{\tau})$ have the same tangent lines as $ \hat{C_i}(T_{max} - \tau)$ and $ \hat{C_j}(T_{max} - \hat{\tau})$.
\end{observation}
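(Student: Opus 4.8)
The plan is to exploit the defining relation $\hat{R_i}(t) = R_i(T_{max}-t)$ directly. The key geometric fact is that a tangent line to two discs depends only on the two centers and the two radii — not on how those radii evolve in time. So I would first set $r_i = R_i(\tau)$ and $r_j = R_j(\hat{\tau})$, and observe that by the definition of the shrinking discs we have $\hat{R_i}(T_{max}-\tau) = R_i(T_{max}-(T_{max}-\tau)) = R_i(\tau) = r_i$, and likewise $\hat{R_j}(T_{max}-\hat{\tau}) = R_j(\hat{\tau}) = r_j$. Since $\hat{C_i}$ and $C_i$ share the same center $O_i$ (and similarly for $j$), the configuration (pair of centers, pair of radii) seen at times $(\tau,\hat{\tau})$ for the growing discs is literally identical to the configuration seen at times $(T_{max}-\tau, T_{max}-\hat{\tau})$ for the shrinking discs.

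Next I would make explicit that the four tangent lines (right-right, right-left, left-right, left-left) between two circles are a function solely of this configuration: given centers $O_i, O_j$ and radii $r_i, r_j$, the internal and external tangent lines are determined by elementary constructions (e.g.\ the external tangents pass through the external homothety center dividing $O_iO_j$ in ratio $r_i:r_j$, the internal tangents through the internal homothety center, and each tangent line is then pinned down by the tangency/perpendicularity condition with one of the circles). None of this references a time parameter. Hence two circle pairs with equal centers and equal radii have exactly the same set of tangent lines. Applying this with the pair $(C_i(\tau), C_j(\hat{\tau}))$ and the pair $(\hat{C_i}(T_{max}-\tau), \hat{C_j}(T_{max}-\hat{\tau}))$, which by the previous paragraph agree in centers and radii, gives the claim.

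I would also remark on the role of the hypothesis $0 \le \tau < \hat{\tau} \le T_{max}$: it guarantees both $\tau$ and $\hat{\tau}$ lie in the domain $[0,T_{max}]$ on which $R_i$ (and hence $\hat{R_i}$) is defined, and correspondingly $T_{max}-\tau, T_{max}-\hat{\tau} \in [0,T_{max}]$, so the discs in question are all well-defined. The strict inequality $\tau < \hat{\tau}$ is what will matter downstream (it corresponds to the robot travelling forward in time along the growing discs and, after the reversal, still travelling forward in time along the shrinking discs from $T_{max}-\hat{\tau}$ to $T_{max}-\tau$), but it is not actually needed for the equality of tangent lines itself.

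There is no real obstacle here — the statement is essentially a bookkeeping identity once one isolates the fact that tangent lines are time-independent functions of the static circle data. The only thing to be careful about is to phrase it so that we compare the correct pair of time instants: the growing disc $C_i$ at time $\tau$ corresponds to the shrinking disc $\hat{C_i}$ at time $T_{max}-\tau$, and the growing disc $C_j$ at time $\hat{\tau}$ corresponds to $\hat{C_j}$ at time $T_{max}-\hat{\tau}$, so in the reversed picture the "later" disc $C_j(\hat{\tau})$ becomes the "earlier" one $\hat{C_j}(T_{max}-\hat{\tau})$; this order-reversal is exactly the point of the construction and should be stated explicitly so the subsequent use of the observation (to convert forward shortest-path computations among growing discs into reverse computations among shrinking discs) is transparent.
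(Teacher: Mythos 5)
Your proposal is correct and matches the paper's (implicit) justification: the paper states this as an immediate consequence of the listed properties $\partial C_i(t) = \partial \hat{C_i}(T_{max}-t)$ and the shared centers, which is exactly the radius/center bookkeeping you carry out. Your added remark that tangent lines depend only on the static configuration of centers and radii is the same underlying fact, just spelled out more explicitly than the paper does.
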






Let $\ell^{rl}_{ij}(\tau) = \overline{pq}$ be a tangent line where $p \in \partial C_i(\tau)$ and $q \in \partial C_j(\hat{\tau})$. Similarly, let $\hat{\ell}^{lr}_{ji}(T_{max} - \hat{\tau}) = \overline{qp}$ be a tangent line where $q\in \partial \hat{C_j}(T_{max} - \hat{\tau})$ and $p \in \partial \hat{C_i}(T_{max} - \tau)$.
 By Observation \ref{obs1}, $\ell^{rl}_{ij}(\tau)$ is equivalent to $\hat{\ell}^{lr}_{ji}(T_{max} - \hat{\tau})$. 
Thus, the two tangent paths $\overrightarrow{\hat{\ell}^{lr}_{ji}}(T_{max} - \hat{\tau})$ and $\overrightarrow{\ell^{rl}_{ij}}(\tau)$ are the same line segments, but with opposite directions.
 We call $\overrightarrow{\hat{\ell}^{lr}_{ji}}(T_{max} - \hat{\tau})$ the \textit{reverse tangent path} of $\overrightarrow{\ell^{rl}_{ij}}(\tau)$.
 Similarly, for a spiral path $\overrightarrow{\sigma}(\tau)$ there exists a \textit{reverse spiral path} $\overrightarrow{\hat{\sigma}}(T_{max} - \hat{\tau})$.
Moreover, we can extend this definition to a robot-path: let $\T$ be a valid robot-path from $s$ to $d$, where the departure time is $\tau_s$ and the arrival time is $\tau_d$. Then, there exists a \textit{reverse robot-path} $\hat{\T}$ from $d$ to $s$ whose departure time is $T_{max} - \tau_d$ and arrival time is $T_{max} - \tau_s$.

Recall that in Step ($ii$) of the SPGD algorithm (see Section \ref{PTDSP}), the adjacency graph $G$ is constructed using the identified tangents and spiral paths in Step ($i$). 
Similarly, we construct the \textit{reverse adjacency graph} $\hat{G}$ using the reverse tangents and spiral paths.

\begin{lemma}\label{reverse-path}
Let $\pi(u, v, \tau_u)$ be a valid path from vertex $u$ to vertex $v$ in $G$, where the departure time is $\tau_u$ and the arrival time is $\tau_v$. Then, there exists a valid path $\hat{\pi}(v, u, T_{max} - \tau_v)$ in $\hat{G}$ whose arrival time is $T_{max} - \tau_u$.
\end{lemma}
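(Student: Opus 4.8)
The statement is essentially the path-level analogue of the segment-level correspondence already established for reverse tangent and reverse spiral paths, so the plan is to lift that correspondence from single edges to paths by induction on the number of edges (equivalently, the number of constituent tangent/spiral paths). First I would set up notation: write $\pi(u,v,\tau_u) = \langle e_1, e_2, \dots, e_k \rangle$ as a sequence of edges of $G$, where $e_i$ corresponds to a tangent or spiral path in $\ee$, and let $\tau_u = \sigma_0 < \sigma_1 < \cdots < \sigma_k = \tau_v$ be the times at which the robot passes through the successive Steiner points along the corresponding geometric robot-path. Validity of $\pi(u,v,\tau_u)$ means the geometric path does not intersect the interior of any disc in $\mathscr{C}$ during the time interval $[\sigma_0, \sigma_k]$.

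The key steps, in order, are: (1) For each edge $e_i$ traversed from time $\sigma_{i-1}$ to $\sigma_i$, invoke the already-established reverse-path correspondence for tangents and spirals (the paragraphs preceding Observation~\ref{obs1} and the definitions of reverse tangent/spiral path) to obtain the reversed edge $\hat e_i$ in $\hat G$, which is the same geometric curve traversed in the opposite direction, with departure time $T_{max} - \sigma_i$ and arrival time $T_{max} - \sigma_{i-1}$. (2) Concatenate $\hat e_k, \hat e_{k-1}, \dots, \hat e_1$ in this reversed order; observe that the arrival time of $\hat e_{i+1}$, namely $T_{max} - \sigma_i$, matches the departure time of $\hat e_i$, so the concatenation is a well-defined path $\hat\pi$ in $\hat G$ starting at $\dot v$ with departure time $T_{max} - \sigma_k = T_{max} - \tau_v$ and ending at $\dot u$ with arrival time $T_{max} - \sigma_0 = T_{max} - \tau_u$, as claimed. (3) Argue validity: the point set of $\hat\pi$ traversed during $[T_{max} - \sigma_k,\, T_{max} - \sigma_0]$ coincides, as a curve in the plane reparametrized by $t \mapsto T_{max} - t$, with the point set of $\pi$ traversed during $[\sigma_0, \sigma_k]$; and by the third bulleted property $\partial C_i(t) = \partial \hat C_i(T_{max}-t)$ (and hence $\hat C_i(T_{max}-t)$ occupies exactly the region $C_i(t)$ does), at the reversed time $T_{max} - \sigma$ the shrinking disc $\hat C_i$ occupies precisely the region the growing disc $C_i$ occupied at time $\sigma$. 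Hence $\pi$ avoiding the interiors of all $C_i$ on $[\sigma_0,\sigma_k]$ is equivalent to $\hat\pi$ avoiding the interiors of all $\hat C_i$ on the reversed interval, giving validity of $\hat\pi$ in $\hat G$.

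The main obstacle I anticipate is not the combinatorial induction — that is essentially bookkeeping — but making the validity argument airtight, specifically checking that the time-reversal reparametrization genuinely matches the two motions pointwise in space \emph{and} in the obstacle configuration simultaneously. One has to be slightly careful that along edge $e_i$ the robot moves at speed $\vr$ from $p$ at time $\sigma_{i-1}$ to the next Steiner point at time $\sigma_i$, while along $\hat e_i$ it moves at speed $\vr$ in the opposite direction; one must verify that the geometric traces agree once we identify time $t$ on the forward path with time $T_{max}-t$ on the reverse path, which relies on Observation~\ref{obs1} guaranteeing that the relevant tangent lines (and the spiral arcs lying on disc boundaries) are literally the same curves for the two disc systems at corresponding times. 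Once that identification is in hand, the interior-avoidance condition transfers verbatim, and the endpoint times follow from the telescoping of the $\sigma_i$'s. I would present steps (1)–(2) briefly and spend the bulk of the written proof on step (3).
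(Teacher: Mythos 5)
Your proposal is correct and follows essentially the same route as the paper: the paper's proof simply invokes the definition of the reverse robot-path (itself built from the edge-level reverse tangent/spiral correspondence) to map $\pi$ to $\hat\pi$ with the stated departure and arrival times, which is exactly the correspondence you make explicit. Your version is considerably more detailed — spelling out the edge-by-edge reversal, the telescoping of intermediate times, and the transfer of validity via $\partial C_i(t) = \partial \hat{C_i}(T_{max}-t)$ — whereas the paper compresses all of this into a one-line appeal to the definition.
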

\begin{proof} 
Since $\pi(u, v, \tau_u)$ is a valid path, there exists a robot-path $\T$ with departure time $\tau_u$ and arrival time $\tau_v$. By definition, there exists a reverse robot-path of $\T$, denoted by $\hat{\T}$, whose departure time and arrival time are $T_{max} - \tau_v$ and $T_{max} - \tau_u$, respectively. Thus, there exists a valid path $\hat{\pi}(v, u, T_{max} - \tau_v)$ in $\hat{G}$. 
 \end{proof}



By the above lemma, for any path $\pi$ from $s$ to $d$ in $G$, there exists a path $\hat{\pi}$ from $d$ to $s$ in $\hat{G}$ of the same length as $\pi$. Thus, in order to find a shortest path form $s$ to $d$, we can find a shortest path in  $\hat{G}$ and reverse its direction. Therefore, similar to the SPGD algorithm in Section \ref{PTDSP}, a reverse shortest path can be found by running Dijkstra's algorithm in $\hat{G}$. We summarize the steps of the above algorithm as follows.


\begin{itemize}
\item[($i$)] Identify the reverse tangents and spiral paths.
\item[($ii$)] Construct the reverse adjacency graph $\hat{G}$.
\item[($iii$)] Run Dijkstra's algorithm on $\hat{G}$ to find a time-minimal path.
\end{itemize}

We call the above algorithm the \textit{reverse shortest path among growing discs} (RSPGD). Using this algorithm, 
 for any given time $t$, a time-minimal path can be found which arrives at destination at time $t$. 
 
\begin{corollary}\label{con-a-}
For a given arrival time $t$ at $d$, $\aaa^{-1}(t)$ can be computed by running the RSPGD algorithm.
\end{corollary}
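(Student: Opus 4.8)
The plan is to show that the RSPGD algorithm, applied to the set of shrinking discs $\hat{\mathscr{C}}$ with destination $s$ and source $d$ and a fixed departure time, produces exactly the value $\aaa^{-1}(t)$. First I would observe that by definition $\aaa^{-1}(t)$ is the latest departure time $\tau$ at $s$ such that there exists a valid robot-path leaving $s$ at $\tau$ and reaching $d$ at or before $t$; since by Lemma~\ref{T-properties-1} the arrival time function $\aaa$ is increasing (hence so is $\aaa^{-1}$ on $[0,T_{max}]$), this latest departure time is realized by a time-minimal path whose arrival time at $d$ is exactly $t$. So computing $\aaa^{-1}(t)$ reduces to: among all valid robot-paths arriving at $d$ at time $t$, find the one with the latest (equivalently, by the maximal-velocity property, the shortest) departure.

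Next I would invoke Lemma~\ref{reverse-path} (and the paragraph following it) with $u=s$, $v=d$: a valid path $\pi(s,d,\tau_s)$ in $G$ with arrival time $t=\tau_d$ corresponds bijectively, via reversal, to a valid path $\hat\pi(d,s,T_{max}-t)$ in $\hat G$ with arrival time $T_{max}-\tau_s$, and the correspondence preserves path length. Running Dijkstra's algorithm on $\hat G$ from $d$ with departure time $T_{max}-t$ (step~($iii$) of RSPGD) returns a time-minimal reverse path, i.e.\ one minimizing the arrival time $T_{max}-\tau_s$, which is the same as maximizing $\tau_s$. Hence the quantity $T_{max}$ minus the arrival time reported by RSPGD equals the latest feasible departure time from $s$, which is precisely $\aaa^{-1}(t)$. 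One should also note the edge case $t>T_{max}$ or $t$ too small for any valid path to exist, where $\aaa^{-1}(t)$ is undefined and RSPGD reports no path; and the easy case where the direct segment $\overrightarrow{sd}$ is unobstructed.

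The main obstacle I anticipate is not a deep one but a bookkeeping one: carefully matching the semantics of ``departure time'' and ``arrival time'' across the reversal so that the time-minimal (shortest arrival) criterion of Dijkstra on $\hat G$ genuinely corresponds to the \emph{latest} departure on the original instance, and confirming that the RSPGD algorithm — which was only asserted to compute ``a time-minimal path which arrives at destination at time $t$'' — in fact optimizes the right objective. This amounts to combining the monotonicity of $\aaa$, the maximal-velocity characterization of time-minimal paths, and the length-preserving bijection of Lemma~\ref{reverse-path}; once those are in hand the corollary follows immediately, so the proof is essentially a short deduction rather than a new construction.
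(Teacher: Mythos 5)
Your proposal is correct and follows essentially the same route as the paper: the paper offers no separate proof of this corollary, deducing it directly from Lemma~\ref{reverse-path} and the preceding paragraph (paths in $G$ correspond length-preservingly to reversed paths in $\hat{G}$, so Dijkstra on $\hat{G}$ with departure time $T_{max}-t$ minimizes $T_{max}-\tau_s$, i.e.\ maximizes the departure time $\tau_s$). You simply spell out the same chain of reasoning, adding the monotonicity of $\aaa$ and the departure/arrival bookkeeping that the paper leaves implicit.
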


We should remark that finding a shortest path from $d$ to $s$ in $\hat{G}$ does not always yield a time-minimal path among shrinking discs. For example, consider the case where the robot stops and waits for some discs to shrink to a certain size, until they open a previously blocked path. Then, the robot starts moving with maximum velocity towards the destination along the recently opened path. This contradicts our assumption that the robot always moves with the maximum speed. Thus, a shortest path in $\hat{G}$ does not guarantee a time-minimal robot-path among shrinking discs. 


\subsection{Approximation Algorithm} \label{algo}

In this section, we present a $(1+\epsilon)$-approximation algorithm for computing the minimum arrival time function $\aaa(t)$. 
To obtain an approximation for $\aaa(t)$, our algorithm (Algorithm \ref{compute-b}) computes a set of arrival time values
$A=\{a_1, a_2, ..., a_m\}$, such that, for $2 \leq i \leq m$, ${a_i \over a_{i-1}} = 1+\epsilon$ (i.e., arrival times are spaced within a factor of $1+\epsilon$ from each other).  Since $\aaa(t)$ is an increasing function (refer to Lemma \ref{T-properties-1}), for each valid arrival time, there exists a unique departure time. This is a key distinction to some of the previously standard variants of the time-dependent shortest path problems.
For each $a_i \in A$, the algorithm runs the RSPGD algorithm to find its corresponding departure time $b_i$. Let us denote the departure time values by a set $B=\{b_1, b_2, ..., b_m\}$. Each departure time in $B$ is referred to as a \textit{sampled time}.

\begin{algorithm}[H]
\caption{Computing $B$} \label{compute-b}
\begin{algorithmic}[1]

\STATE $B = \emptyset$, $A = \emptyset$, $i = 0$, $a_0 = \aaa(0)$ \\ \COMMENT{ $\aaa(0)$ is calculated by running the SPGD algorithm \cite{shortest_path_in_growing_disc_yi}}
\WHILE {$a_i < T_{max}$}
\STATE $a_{i + 1} = (1+\epsilon)a_{i}$ 
\STATE $b_i = \aaa^{-1}(a_{i + 1})$
\STATE $B := B \cup \{b_i\}$
\STATE $A := A \cup \{a_{i + 1}\}$
\STATE $i = i + 1$
\ENDWHILE
\STATE \textbf{return} $B$ and $A$

\end{algorithmic}
\end{algorithm}

\begin{lemma}\label{tmp}
Algorithm \ref{compute-b} runs $O({1 \over \epsilon}\log({\vr \over \vm}))$ time-minimal path computations.
\end{lemma}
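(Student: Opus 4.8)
The plan is simply to count iterations of the \textbf{while} loop. Each pass through the loop performs exactly one call to $\aaa^{-1}$ in line~4, which by Corollary~\ref{con-a-} is one run of the RSPGD algorithm, i.e.\ one time-minimal path computation; the only other such computation is the single SPGD call used to initialize $a_0$ in line~1. Hence, if the loop executes $m$ times, the algorithm performs $m+1$ time-minimal path computations, and it suffices to show $m = O(\tfrac{1}{\epsilon}\log(\tfrac{\vr}{\vm}))$.

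First I would note that after $i$ iterations the variable equals $a_i = (1+\epsilon)^i a_0$ (line~3 multiplies by $1+\epsilon$ each time), and the loop stops as soon as $a_i \ge T_{max}$. Therefore $m \le \log_{1+\epsilon}(T_{max}/a_0) + O(1)$, and the task reduces to bounding the ratio $T_{max}/a_0$ and then estimating $\log_{1+\epsilon}$.

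Next I would bound the numerator and denominator separately via Lemma~\ref{T-properties-2}. For the denominator, $a_0 = \aaa(0) \ge \tfrac{|\overline{sd}|}{\vr}$ by part~(i). For the numerator, observe that for every departure time $\tau < T_{max}$ a valid path exists, and the robot cannot arrive before it leaves, so $\tau \le \aaa(\tau)$; combining with part~(ii) gives $\tau \le \aaa(\tau) \le \tfrac{|\overline{sd}|}{\vm}$, and letting $\tau \to T_{max}$ yields $T_{max} \le \tfrac{|\overline{sd}|}{\vm}$. Dividing the two bounds, $T_{max}/a_0 \le \vr/\vm$. Finally, using $\ln(1+\epsilon) \ge \epsilon/2$ for $\epsilon \in (0,1]$ (equivalently $\log_{1+\epsilon} x = \Theta(\tfrac{1}{\epsilon}\log x)$), we obtain $m = O\!\big(\tfrac{1}{\epsilon}\log(\tfrac{\vr}{\vm})\big)$, which is the claimed count.

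The geometric-series bookkeeping and the $\log_{1+\epsilon}$ estimate are routine; the one step requiring a short argument is the upper bound $T_{max} \le \tfrac{|\overline{sd}|}{\vm}$, i.e.\ turning the qualitative fact that the destination is eventually enclosed into a concrete time bound, and this follows cleanly from the trivial inequality ``arrival time $\ge$ departure time'' together with Lemma~\ref{T-properties-2}(ii). I do not expect a genuine obstacle; the only thing to watch is a possible off-by-one on the last iteration (where $a_{i+1}$ may slightly exceed $T_{max}$), which only affects the additive constant absorbed into the $O(\cdot)$.
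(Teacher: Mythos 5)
Your proof is correct and follows essentially the same route as the paper: both bound the number of geometric steps of ratio $1+\epsilon$ between the extreme arrival times, sandwich that range between $|\overline{sd}|/\vr$ and $|\overline{sd}|/\vm$ via Lemma \ref{T-properties-2}, and use $\log(1+\epsilon) \geq \epsilon/2$. The only cosmetic difference is that you bound $T_{max}/a_0$ (adding a short argument that $T_{max} \leq |\overline{sd}|/\vm$), whereas the paper applies Lemma \ref{T-properties-2} directly to $\aaa(b_1)$ and $\aaa(b_m)$; both yield the same ratio $\vr/\vm$.
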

\begin{proof} 
We first estimate the number of sampled times in $B$. Let $B=\{b_1, b_2, ..., b_m\}$. By definition we have 
\begin{align*}   
{\aaa({b_m}) \over \aaa({b_1})} &= {a_m \over a_1} =(1+\epsilon)^{m-1} 
\end{align*} 
By Lemma \ref{T-properties-2}, for any $1 \leq i \leq k$ we have $\aaa(b_i) \leq {|\overline{sd}| \over \vr}$ and $\aaa(b_i) \geq {|\overline{sd}| \over \vm}$. So, we obtain
\begin{align*}(1+\epsilon)^{m-1} \leq {\vr \over \vm}\end{align*}
For $\epsilon \in (0, 1)$, we observe that ${\epsilon \over 2} < \log (1+\epsilon)$. Thus, 
\begin{align*}(m-1) {\epsilon \over 2} \leq \log\Big({\vr \over \vm}\Big)\end{align*}
 Therefore, there are $O({1 \over \epsilon}\log({\vr \over \vm}))$ sampled times in  $B$. For each sampled time, the algorithm runs an instance of the reverse shortest path algorithm in Line 5. Thus, the total number of time-minimal path computations is $O({1 \over \epsilon}\log({\vr \over \vm}))$.
\end{proof}
Using the sampled times reported by Algorithm \ref{compute-b}, we now define a step function $\aae : [0, T_{max}] \times (0,1) \rightarrow A$ such that for $t \in [b_i, b_{i+1})$, we have $\aae(t, \epsilon) = a_{i+1}$.
\begin{lemma}\label{approximation}
For any real constant value of $\epsilon \in (0, 1)$, function $\aae$ is a $(1+\epsilon)$-approximation for the arrival time function $\aaa$.
\end{lemma}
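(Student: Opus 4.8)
The plan is to show the two inequalities $\aaa(t) \leq \aae(t,\epsilon)$ and $\aae(t,\epsilon) \leq (1+\epsilon)\aaa(t)$ directly from the construction of the sampled times, using only the monotonicity of $\aaa$ (Lemma \ref{T-properties-1}) and the defining relations $a_{i+1} = (1+\epsilon)a_i$ and $b_i = \aaa^{-1}(a_{i+1})$. First I would fix a departure time $t \in [0, T_{max})$ and locate the index $i$ with $t \in [b_i, b_{i+1})$, so that by definition $\aae(t,\epsilon) = a_{i+1}$. The key observation is that since $b_i = \aaa^{-1}(a_{i+1})$ is the latest departure time whose minimum arrival time is $a_{i+1}$, and $\aaa$ is increasing, we have $\aaa(b_i) = a_{i+1}$; combined with $t \geq b_i$ and monotonicity of $\aaa$ this gives $\aaa(t) \geq \aaa(b_i) = a_{i+1} = \aae(t,\epsilon)$, which is the lower inequality.

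For the upper inequality I would use the right endpoint of the interval: since $t < b_{i+1}$ and $\aaa$ is increasing, $\aaa(t) \leq \aaa(b_{i+1})$. Now $b_{i+1} = \aaa^{-1}(a_{i+2})$, so $\aaa(b_{i+1}) = a_{i+2} = (1+\epsilon) a_{i+1}$. Hence $\aaa(t) \leq (1+\epsilon) a_{i+1}$, i.e. $a_{i+1} = \aae(t,\epsilon) \leq (1+\epsilon)\aaa(t)$ once we divide through — wait, more precisely, from $\aaa(t) \geq a_{i+1}$ we already have $\aae(t,\epsilon) = a_{i+1} \leq \aaa(t) \leq (1+\epsilon)\aaa(t)$, so the upper bound actually needs the chain the other way: we want $\aae(t,\epsilon) \leq (1+\epsilon)\aaa(t)$, and since $\aaa(t) \geq a_{i+1} = \aae(t,\epsilon)$ this is immediate from $\epsilon > 0$. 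So in fact the nontrivial direction is the lower bound $\aaa(t) \geq \aae(t,\epsilon)$, and the $(1+\epsilon)$ slack is what lets the step function be a valid upper approximation at the same time — I should double-check the roles and state it as: $\aae(t,\epsilon) = a_{i+1} = \aaa(b_i) \leq \aaa(t)$ and $\aaa(t) \leq \aaa(b_{i+1}) = a_{i+2} = (1+\epsilon)a_{i+1} = (1+\epsilon)\aae(t,\epsilon) \leq (1+\epsilon)\aaa(t)$ — no, the last step is wrong. Let me instead conclude: from $\aae(t,\epsilon) \leq \aaa(t)$ and $\aaa(t) \leq (1+\epsilon)a_{i+1} = (1+\epsilon)\aae(t,\epsilon)$ we get $\aaa(t) \leq \aae(t,\epsilon) \le \aaa(t) \le (1+\epsilon)\aaa(t)$... the cleanest correct statement is $\aaa(t) \le \aae(t,\epsilon)$ is FALSE; rather $\aae(t,\epsilon)\le\aaa(t)$, so the definition of $(1+\epsilon)$-approximation must be read as $\aaa$ being sandwiched — I will need to reconcile the direction of the sandwich with the paper's definition, which says $\aaa(t) \le \aae(t,\epsilon) \le (1+\epsilon)\aaa(t)$.

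This sign/direction reconciliation is the main obstacle, and I expect it resolves as follows: the issue is which arrival value $\aae$ reports on the interval. If on $[b_i, b_{i+1})$ the function returns $a_{i+1}$ but $b_i = \aaa^{-1}(a_{i+1})$, then $\aaa(b_i) = a_{i+1}$ and $\aaa(b_{i+1}) = a_{i+2}$, so for $t$ in the half-open interval $\aaa(t) \in [a_{i+1}, a_{i+2})$, hence $\aae(t,\epsilon) = a_{i+1} \le \aaa(t)$ — this is an \emph{under}-approximation, contradicting the stated definition. The resolution must be that $\aae$ reports the arrival value associated with the \emph{right} endpoint's preimage, or equivalently that on the interval $[b_{i-1}, b_i)$ containing $t$ one returns $a_{i+1} = \aaa(b_i)$; then $\aaa(t) \le \aaa(b_i) = a_{i+1} = \aae(t,\epsilon)$ and $\aae(t,\epsilon) = a_{i+1} = (1+\epsilon)a_i = (1+\epsilon)\aaa(b_{i-1}) \le (1+\epsilon)\aaa(t)$, giving exactly the claimed sandwich. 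So in the write-up I would (a) carefully restate which interval $t$ falls in and what $\aae$ returns there, matching it to the definition preceding the lemma, (b) invoke monotonicity to bound $\aaa(t)$ above by $\aaa$ at the right endpoint and below by $\aaa$ at the left endpoint, (c) use $\aaa(b_j) = a_{j+1}$ (which itself follows from $b_j = \aaa^{-1}(a_{j+1})$ and the fact that $\aaa^{-1}$ is a genuine inverse on the monotone range, per Corollary \ref{con-a-}), and (d) substitute $a_{i+1} = (1+\epsilon)a_i$ to close the $(1+\epsilon)$ gap. I would also note the boundary cases $t < b_1$ (where $\aaa(t) \le a_1 = \aaa(0)$, handled by the initialization) and $t$ near $T_{max}$ so that the approximation is defined on the whole domain where $\aaa$ is finite.
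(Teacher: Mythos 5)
Your proposal is correct and, once you resolve the indexing, it is essentially the paper's own argument: sandwich $\aaa(t)$ between its values at the two endpoints of the sampling interval via monotonicity (Lemma \ref{T-properties-1}) and close the gap with the ratio $a_{i+1}/a_i = 1+\epsilon$. The off-by-one you wrestled with at length is a genuine inconsistency in the paper itself --- Algorithm \ref{compute-b} sets $b_i = \aaa^{-1}(a_{i+1})$, so literally $\aaa(b_i)=a_{i+1}$ and the step function as defined would under-approximate, yet the paper's proof silently assumes $\aaa(b_i)=a_i$ --- and your re-indexed reading is the correct repair.
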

\begin{proof} 
 By definition, for any $t \in [b_{i}, b_{i+1})$ we have $\aae(t, \epsilon) = a_{i+1}$. Referring to the fact that $\aaa$ is an increasing function, for any $t \in [b_{i}, b_{i+1})$ we obtain $\aaa(b_{i}) \leq \aaa(t) < \aaa(b_{i+1})$. Thus,
\begin{align*}
{\aaa(b_{i+1}) \over \aaa(b_{i+1})} < { \aae(t, \epsilon) \over \aaa(t)} \leq {\aaa(b_{i+1}) \over \aaa(b_{i})}  
\end{align*}
Since we have $\aaa(b_{i+1}) = a_{i+1}$ and $\aaa(b_{i}) = a_{i}$:
\begin{align*}  
1 < {\aae(t, \epsilon) \over \aaa(t)} \leq  {a_{i+1} \over a_{i}} = 1 + \epsilon 
\end{align*}
\end{proof}

\begin{theorem}\label{tmp-all}
The minimum arrival time function can be approximated by executing $O({1 \over \epsilon}\log({\vr \over \vm}))$ time-minimal path computations.
\end{theorem}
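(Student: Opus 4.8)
The plan is to assemble Theorem \ref{tmp-all} directly from the three lemmas already established about Algorithm \ref{compute-b} and the step function $\aae$. First I would invoke Lemma \ref{tmp}, which bounds the number of time-minimal path computations performed by Algorithm \ref{compute-b} by $O({1 \over \epsilon}\log({\vr \over \vm}))$; each iteration of the while-loop contributes exactly one call to the RSPGD routine (Line 5, computing $b_i = \aaa^{-1}(a_{i+1})$, which is valid by Corollary \ref{con-a-}), plus the single initial call to the SPGD algorithm to obtain $a_0 = \aaa(0)$, which does not change the asymptotic count. Then I would invoke Lemma \ref{approximation}, which certifies that the resulting step function $\aae(\cdot, \epsilon)$ built from the sampled arrival/departure pairs $(a_i, b_i)$ satisfies $\aaa(t) \le \aae(t, \epsilon) \le (1+\epsilon)\aaa(t)$ for every $t \in [0, T_{max})$. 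Combining these two facts yields exactly the claimed statement: the minimum arrival time function admits a $(1+\epsilon)$-approximation computable with $O({1 \over \epsilon}\log({\vr \over \vm}))$ time-minimal path computations.

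A few small gaps need to be addressed to make the chain airtight. One is the behaviour at and beyond $T_{max}$: since $\aaa(t) = \infty$ for $t > T_{max}$, I would note that $\aae$ is only defined (and only needs to be correct) on $[0, T_{max}]$, and that the loop terminates precisely because the monotone increasing sequence $a_i = (1+\epsilon)^i a_0$ eventually exceeds $T_{max}$ — here one should also observe $a_0 = \aaa(0) > 0$ under the standing assumption that the direct segment $\overrightarrow{sd}$ is obstructed, so the geometric growth is non-degenerate. A second point is that the monotonicity of $\aaa$ (Lemma \ref{T-properties-1}) is what guarantees $\aaa^{-1}$ is well-defined as a function, so that the sampled departure times $b_i$ are unambiguous; this is implicitly used inside Lemma \ref{approximation} but worth restating. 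I would also remark that the per-query time bound from the abstract follows by binary search over $B$, though that is not part of this theorem's statement.

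I do not expect a genuine obstacle here — the theorem is essentially a summary corollary of the preceding development, and the real technical content (the bound on $|B|$ via Lemma \ref{T-properties-2}, and the reversibility machinery enabling $\aaa^{-1}$ to be evaluated) has already been done. The only thing requiring mild care is making explicit that treating SPGD/RSPGD as a black box is legitimate: each call returns a shortest path in the (reverse) adjacency graph, and by Lemma \ref{reverse-path} a shortest path in $\hat{G}$ reverses to a path of equal length in $G$, so $b_i$ is indeed the latest departure time achieving arrival $a_{i+1}$. With that noted, the proof is a two-line citation of Lemmas \ref{tmp} and \ref{approximation}.
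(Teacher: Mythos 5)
Your proposal is correct and takes exactly the same route as the paper, whose entire proof is a one-line citation of Lemmas \ref{tmp} and \ref{approximation}. The additional remarks you make (termination of the loop, well-definedness of $\aaa^{-1}$ via monotonicity, and the role of Lemma \ref{reverse-path}) are sound elaborations of points the paper leaves implicit.
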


\begin{proof}
This is a direct result of Lemmas \ref{tmp} and \ref{approximation}. 
\end{proof}

Since the time-minimal path algorithm for fixed departure times runs in $O(n^2 \log (n))$ time \cite{shortest_path_in_growing_disc_yi}, the time complexity of our preprocessing algorithm is $O({n^2 \over \epsilon}\log({\vr \over \vm})\log(n))$. For a given query departure time $t \geq 0$, we can report the approximated value of the minimum arrival time (i.e., $\aae(t, \epsilon)$) using a binary search in $B$. 
 In Lemma \ref{tmp}, we proved that the size of the set $B$ is $O({1 \over \epsilon}\log({\vr \over \vm}))$. Thus, the query time of our algorithm is $O(\log ({1 \over \epsilon}) + \log\log({\vr \over \vm}))$.

\section{Conclusions}\label{conclusions}
In this paper, we studied the time-dependent minimum arrival time problem among growing discs. We
presented a $(1+\epsilon)$-approximation to compute the minimum arrival time function. Our algorithm runs  shortest path algorithms as a black-box and  its time complexity  is determined by the number of such calls. 
Therefore, for different problem settings, we can plug-in different shortest path algorithms.
For example, Nouri and Sack \cite{nouri} studied a variant of the SPGD problem where the growth rates of the discs are given as polynomial functions of degree $\dd$. In this algorithm a query time-minimal path can be found in $O(n^2 \log (\dd n))$ time. By plugging-in this algorithm, our preprocessing step executes $O({1 \over \epsilon}\log({\vr \over \vm}))$  shortest path computations, running in $O({n^2 \over \epsilon}\log({\vr \over \vm})\log(\dd n))$ time.

In order to compute the output size of the minimum arrival time function $\aaa(t)$, one would need to determine the number of sub-arcs in the lower envelope, denoted by $F_{\aaa}$. We presented a lower bound on $F_{\aaa}$ and we leave as an open problem to establish an upper bound. 

Another interesting open problem is to approximate the minimum arrival time function when the query involves the departure time at $s$, as well as
$s$ and $d$ as part of the input.


\bibliographystyle{unsrt}
\bibliographystyle{plainurl}
\bibliography{thesis}

\end{document}